\newcommand{\nc}{\newcommand}
\nc{\rnc}{\renewcommand}
\nc\benum{\begin{enumerate}}
\nc\eenum{\end{enumerate}}
\nc\bit{\begin{itemize}}
\nc\eit{\end{itemize}}
\newcommand{\secref}[1]{Section~\ref{sec:#1}}
\newcommand{\thmref}[1]{Theorem~\ref{thm:#1}}
\newcommand{\R}{\mathbb{R}}
\newcommand{\E}{\mathbb{E}}
\newcommand{\F}{\mathbb{F}}
\newcommand{\ket}[1]{| #1 \rangle}
\newcommand{\bra}[1]{\langle #1|}
\newcommand{\bracket}[3]{\langle #1|#2|#3 \rangle}
\DeclareMathOperator{\poly}{poly}
\DeclareMathOperator{\tr}{tr}
\DeclareMathOperator{\Var}{Var}
\DeclareMathOperator{\Inf}{Inf}
\DeclareMathOperator{\sgn}{sgn}
\newcommand{\be}{\begin{equation}}
\newcommand{\ee}{\end{equation}}
\newcommand{\bea}{\begin{eqnarray}}
\newcommand{\eea}{\end{eqnarray}}
\newcommand{\bes}{\begin{equation*}}
\newcommand{\ees}{\end{equation*}}
\newcommand{\beas}{\begin{eqnarray*}}
\newcommand{\eeas}{\end{eqnarray*}}
\newtheorem*{rep@theorem}{\rep@title}
\newcommand{\newreptheorem}[2]{%
\newenvironment{rep#1}[1]{%
 \def\rep@title{#2 \ref{##1} (restated)}%
 \begin{rep@theorem}}%
 {\end{rep@theorem}}}
\newtheorem{thm}{Theorem}
\newtheorem*{thm*}{Theorem}
\newtheorem{lem}[thm]{Lemma}
\newtheorem*{lem*}{Lemma}
\def\eq#1{(\ref{eq:#1})}
\def\begsub#1#2\endsub{\begin{subequations}\label{eq:#1}\begin{align}#2\end{align}\end{subequations}}
\nc\qand{\qquad\text{and}\qquad}
\nc\mnb[1]{\medskip\noindent{\bf #1}}
\begin{document}

\title{Extremal eigenvalues of local Hamiltonians}

\author{Aram W. Harrow}
\affiliation{Center for Theoretical Physics, Department of Physics,
  MIT}
\email{aram@mit.edu}
\author{Ashley Montanaro}
\affiliation{School of Mathematics, University of Bristol, UK}
\email{ashley.montanaro@bristol.ac.uk}

\begin{abstract}
We apply classical algorithms for approximately solving constraint
satisfaction problems to find bounds on extremal eigenvalues of local
Hamiltonians. We consider spin Hamiltonians for which we have an upper
bound on the number of terms in which each spin participates, and find
extensive bounds for the operator norm and ground-state
energy of such Hamiltonians under this constraint.  
In each case the
bound is achieved by a product state which can be found efficiently
using a classical algorithm. 
\end{abstract}

\maketitle


\section{Introduction}

The eigenvalue statistics of a local Hamiltonian are related to its
structure. One example is the level spacings of chaotic vs integrable systems,
which can be seen as the small-scale structure of the spectrum.  What
about the large-scale features, such as the extremal eigenvalues?  Do
these scale differently for non-interacting or interacting systems?
It is generally understood that interacting systems can be frustrated,
meaning that all the local terms cannot simultaneously be in their
ground state.  This situation is generically true for local terms with
entangled ground states. But how much of an effect can frustration have on
the ground-state energy of a system? 

Here we study the extremal eigenvalues of quantum Hamiltonians which are only weakly interacting, in the sense that they can be
written as sums of terms where each term depends only on a few qubits,
and each qubit is included in only a few terms.  With this mild form
of locality imposed, how far apart must the largest and smallest
eigenvalues be?  If the Hamiltonian were non-interacting, the
separation should scale with the size of the system.  For a more
general Hamiltonian, the extremal eigenvectors may be highly entangled
and interacting terms may contribute opposite signs.  Nevertheless, in
this paper we show lower bounds on the norms of local Hamiltonians
under very general conditions.  An additional argument shows
specifically that the ground-state energy is low (or if desired, that
the top eigenvalue is high).

\begin{thm}
\label{thm:main}
Let $H$ be a traceless $k$-local Hamiltonian on $n$ qubits such that
$k=O(1)$. Assume that $H$ can be expressed as a weighted sum of $m$
distinct Pauli terms such that each term has operator norm
$\Theta(1)$, and each qubit participates in at most $\ell$ terms. Then
$\|H\| \ge \Omega(m / \sqrt{\ell} )$ and $\lambda_{\min}(H) \le -
\Omega(m / \ell)$. In each case the bound is achieved by a product
state which can be found efficiently using a classical algorithm. 
\end{thm}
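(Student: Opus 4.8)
The plan is to reduce the operator problem to a classical optimization over product states and then invoke approximation guarantees for bounded-occurrence constraint satisfaction problems (CSPs). Write $H=\sum_{i=1}^m c_iP_i$, with the $P_i$ distinct non-identity Pauli strings supported on $\le k$ qubits and $|c_i|=\Theta(1)$. For a product of single-qubit (possibly mixed) states $\rho=\bigotimes_q \tfrac12(I+\vec r_q\cdot\vec\sigma)$, $\vec r_q\in\R^3$, $\|\vec r_q\|\le 1$, one computes $\tr(H\rho)=\sum_i c_i\prod_{q\in\supp P_i}(\vec r_q)_{a_{i,q}}$, a multilinear polynomial of degree $\le k$ in the $3n$ real variables $(\vec r_q)_\sigma$, where $a_{i,q}\in\{x,y,z\}$ records which Pauli $P_i$ applies to qubit $q$. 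Since every such $\rho$ is a valid state, $\|H\|\ge \max_\rho|\tr(H\rho)|$ and $\lambda_{\min}(H)\le\min_\rho\tr(H\rho)$, so it suffices to find, efficiently, product states witnessing the two claimed bounds for this polynomial; constructing them also establishes the final sentence of the theorem.

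For $\|H\|=\Omega(m/\sqrt\ell)$, note first that a uniformly random product state already gives $\|H\|\ge|\tr(H\rho)|=\Omega(\sqrt m)$ by a second-moment estimate ($\E\,\tr(H\rho)=0$ while $\E\,\tr(H\rho)^2=\sum_i c_i^2\,3^{-|\supp P_i|}=\Theta(m)$); since this is weaker than the target whenever $\ell\ll m$, a genuine optimization is needed, and here I would use a semidefinite relaxation together with rounding. The relevant classical fact is that a bounded-occurrence XOR-type CSP — $m$ parity/Pauli terms of arity $\le k$ with each variable in $\le\ell$ terms — has an assignment whose objective has magnitude $\Omega(m/\sqrt\ell)$, found by rounding the natural SDP. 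Morally the SDP value is this large because it dominates a spectral quantity of the coefficient tensor, which has normalized Frobenius norm $\Theta(\sqrt m)$ effectively spread over $O(n)$ directions, combined with the constraint $m=O(n\ell)$ forced by bounded occurrence; the rounding step loses only a degree-$k$ Grothendieck-type factor, which is $2^{O(k)}=O(1)$. Transporting this through the reduction above, the same SDP relaxation of $\max_\rho|\tr(H\rho)|$ together with an analogous Bloch-sphere rounding yield a product state with $|\tr(H\rho)|=\Omega(m/\sqrt\ell)$, hence $\|H\|=\Omega(m/\sqrt\ell)$.

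For $\lambda_{\min}(H)\le-\Omega(m/\ell)$ I would instead exploit a packing argument. Build the conflict graph on the $m$ terms, joining $P_i$ and $P_j$ whenever their supports intersect; its maximum degree is less than $k\ell$, so a greedy colouring with $O(\ell)$ colours yields a colour class $C^\ast$ of pairwise non-overlapping terms with $\sum_{i\in C^\ast}|c_i|=\Omega(m/\ell)$. Because the supports of the terms in $C^\ast$ are disjoint, a product state can put each such term into its $-\sgn(c_i)$ eigenstate, contributing $-\Omega(m/\ell)$; choosing the remaining single-qubit states (or a suitably randomized choice of them) so that the expected contribution of every other term vanishes then gives a product state of energy $\le -\Omega(m/\ell)$, whence $\lambda_{\min}(H)\le-\Omega(m/\ell)$.

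I expect the \textbf{main obstacle} to be precisely this last point. Terms whose entire support lies inside $\bigcup_{i\in C^\ast}\supp P_i$ — in particular products of several terms of $C^\ast$ — are not neutralised by randomising the outside qubits, and a naive count shows their total weight could in principle swamp the $\Omega(m/\ell)$ we have gained. Controlling them — by recursing on this residual sub-instance, by choosing $C^\ast$ to avoid such terms, or by importing wholesale the relevant MAX-$k$-CSP approximation guarantee for bounded-occurrence instances — is the crux of the ground-state bound. The remaining ingredients are routine: greedy colouring, and solving and rounding the SDP, all run in polynomial time, so the witnessing product states are produced by an efficient classical algorithm.
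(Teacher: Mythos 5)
Your overall strategy---reduce to a polynomial optimization over product states, then invoke classical approximation guarantees for bounded-occurrence instances---is indeed the paper's strategy, but both halves of your argument have genuine gaps at exactly the points where the real work happens. For the norm bound, the ``relevant classical fact'' you invoke (an assignment of magnitude $\Omega(m/\sqrt{\ell})$ for bounded-occurrence degree-$k$ instances) is a real theorem (Dinur et al., made algorithmic by Barak et al.\ and independently by H{\aa}stad), but it is not proved by SDP rounding, and your sketch of why SDP plus rounding would work fails for $k\ge 3$: there is no degree-$k$ Grothendieck-type theorem with only a $2^{O(k)}$ loss, and the gap between the natural relaxation of a degree-$k$ multilinear form over the hypercube and its true maximum can grow polynomially in $n$ once $k\ge 3$. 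The actual proofs go via random restrictions and anti-concentration of low-degree polynomials. Separately, even granting the classical theorem as a black box, your reduction leaves an optimization over the Bloch ball $\|\vec r_q\|\le 1$, whereas the classical theorems concern polynomials on the Boolean hypercube; you cannot set the three coordinates of $\vec r_q$ to independent $\pm 1$ values and remain a valid state. The paper's tetrahedral 2-design is precisely the missing discretization: the four states $\ket{\psi_{x_1x_2}}$ realize the Bloch coordinates as the correlated monomials $x_1/\sqrt{3}$, $x_1x_2/\sqrt{3}$, $x_2/\sqrt{3}$ in two genuine $\pm1$ variables per qubit, so that $\bra{\psi_x}H\ket{\psi_x}$ becomes a degree-$2k$ hypercube polynomial with Fourier coefficients $\widehat{H}_s\,3^{-|s|/2}$, to which the classical algorithm applies verbatim.

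For the ground-state bound you have correctly located the fatal obstacle yourself: terms supported inside $\bigcup_{i\in C^\ast}\supp P_i$ are fixed rather than averaged out, and their total weight can be $\Theta(m)$ when $\ell$ is large, swamping the $\Omega(m/\ell)$ gain. None of the three fixes you list is carried out, and ``import the MAX-CSP guarantee wholesale'' is circular, since that guarantee is essentially what is being proved. The paper's resolution is a different, sequential greedy argument rather than a one-shot packing: repeatedly pick the largest remaining Fourier coefficient $|\hat f(S)|$, fix the variables in $S$ so as to gain $|\hat f(S)|$ in the constant term, and note that this destroys at most $2k\ell\,|\hat f(S)|$ of the remaining $\ell_1$ Fourier weight (each of the $\le k$ fixed variables lies in $\le\ell$ surviving monomials, and each affected monomial transfers its mass onto at most one other coefficient, possibly cancelling). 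Telescoping this amortized accounting over all rounds yields a guaranteed total gain of $W/(2k\ell)=\Omega(m/\ell)$, which is the step your colouring argument cannot supply.
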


In the above theorem, $\|H\|$ is the operator norm of $H$ and
$\lambda_{\min}(H)$ is the lowest eigenvalue (ground-state energy) of
$H$.  (Of course a similar statement could also be made about
$\lambda_{\max}$.  We focus on $\lambda_{\min}$ because of its
relevance to physical systems and to constraint satisfaction
problems.)  The notation $O(f(x)), \Omega(f(x))$ refers to functions
that are $\leq c f(x)$ or $\geq cf(x)$, respectively, for some
absolute constant $c$. We write $\Theta(f(x))$ to mean both a function
that is both $O(f(x))$ and $\Omega(f(x))$.

If $H$ were a non-interacting Hamiltonian ($k=\ell=1$) the largest and smallest
eigenvalues would both be $\Theta(m)=\Theta(n)$.  Thus \thmref{main} can be
viewed as saying that interaction can reduce the norm of $H$ by at
most a $O(\sqrt{\ell})$ factor and can reduce the smallest eigenvalue
by at most a $O(\ell)$ factor. Observe that the bound on
$\lambda_{\min}(H)$ is $-\Omega(n)$ for all lattice Hamiltonians. This
proves that for any such system the ground-state energy is smaller
than the average energy by an extensive amount.  By constrast, using
our information about $\tr H^2$ alone would only show that
$\lambda_{\min}(H) \leq -\Omega(\sqrt{m})$, which is in general a
vanishing fraction of system size.

The restriction to terms of weight $\Theta(1)$ in Theorem \ref{thm:main} is not essential and is only included to simplify the bounds. Further, the hidden constants are not overly large for small $k$; more precise statements of our results are given below. For example, for 2-local qubit Hamiltonians, the precise bound on $\lambda_{\min}$ we obtain is $\lambda_{\min}(H) \le - \|\widehat{H}\|_1 / (24\ell)$, where $\|\widehat{H}\|_1$ is the sum of the absolute values of the coefficients in the Pauli expansion of $H$. As a simple instance where this bound can be applied, consider the antiferromagnetic Heisenberg model $H = \sum_{\langle i,j \rangle} X_i X_j + Y_i Y_j + Z_i Z_j$ on a regular lattice with $n$ vertices. Then, for {\em any} such lattice, we obtain $\lambda_{\min}(H)/n \le -1/48$. 

Theorem \ref{thm:main} can be applied to qudit Hamiltonians with local dimension $d > 2$ by embedding each subsystem in $\lceil \log_2 d\rceil$ qubits at the expense of increasing the locality from $k$ to $k \lceil \log_2 d \rceil$.

{\bf Proof outline.}
Both results that make up Theorem \ref{thm:main} are based on the use
of a correspondence between local quantum Hamiltonians and low-degree
polynomials, which allows us to apply classical approximation
algorithms for constraint satisfaction problems. This correspondence
uses a qubit 2-design~\cite{delsarte77,renes04} to convert arbitrary
qubit Hamiltonians to polynomials over boolean variables.

The operator norm bound in Theorem \ref{thm:main} (stated more precisely as Lemma \ref{lem:quantum1} below) is based on recent work of Barak et al.~\cite{barak15} which gives an efficient randomised algorithm for satisfying a relatively large fraction of a set of linear equations over $\F_2$. The bound on $\lambda_{\min}$ (stated more precisely as Lemma \ref{lem:quantum3} below) is based on analysing a natural greedy algorithm which is similar to a classical algorithm of H\aa stad~\cite{hastad00}. Our results can be seen as generalising these two classical algorithms to the quantum regime.

{\bf Other related work.} Bansal, Bravyi and
Terhal~\cite{bansal09} have previously shown that, for 2-local qubit
Hamiltonians $H$ on a planar graph with Pauli interactions of weight
$\Theta(1)$, $\lambda_{\min}(H) \le -\Omega(m)$. Similarly to our
result, their proof uses a mapping between quantum and classical
Hamiltonians and proves the existence of a product state achieving a
$-\Omega(m)$ bound. However, the two results are not comparable; ours
holds for non-planar graphs and $k$-local Hamiltonians for $k>2$,
while theirs encompasses two-local Hamiltonians on planar graphs with
vertices of arbitrarily high degree. The quantum-classical mapping
used is also different.  Finally, the constants in our results are somewhat better
(for example, they obtain $\lambda_{\min}(H)/n \le -1/135$ for the antiferromagnetic Heisenberg model on a 2D triangular lattice).

This work was motivated by \cite{barak15} (whose main result is
presented in \secref{operator}).  Ref.~\cite{barak15} in turn was
inspired by~\cite{QAOA,QAOA2}, which gives a quantum algorithm for
finding low-energy states of classical Hamiltonians.  
The relative performance of these different algorithms
(ours/\cite{barak15} vs.~\cite{QAOA,QAOA2}) is in general unknown,
and it is also open to determine the extent to which~\cite{QAOA} can
be generalised to finding low-energy states of local Hamiltonians.

One other related work is~\cite{BH-product}, which showed that when
$k=2$ and the degree of the interaction graph is large, then product states can provide a good
approximation for any state, with respect to the metric given by
averaging the trace distance over the pairs of systems acted on by the
Hamiltonian.  In particular this means they can approximate the
minimum and maximum eigenvalues.  Both our result and
\cite{BH-product} yield similar error bounds (ours are somewhat
tighter), but in this sense apply to incomparable settings:
\cite{BH-product} show that product states nearly match the energy of
some other state (e.g.~the true ground state) with possibly unknown
energy while our paper puts explicit bounds on the maximum and/or
minimum energy.

Another way to think about our work is as showing that interacting
spins must nevertheless behave in some ways like noninteracting spins.
In this picture, some vaguely related work is \cite{HMH04,BrandaoC15},
which show that under some conditions lattice systems have a density
of states that is approximately Gaussian.  These results are
incomparable to ours, even aside from the different assumptions,
because we put bounds on the extremal eigenvalues while they study the
density of states and/or the thermal states at nonzero temperature.
Theorem 4 of Ref.~\cite{Montanaro12} also bounded the density of states of
$k$-local Hamiltonians under general conditions, but in the opposite
direction: i.e.~putting upper bounds on how many eigenvalues could have
large absolute value.

{\bf Why product states?} Ground states of local Hamiltonians may be
highly entangled~\cite{FreedmanH13}.
But our
bounds on $\|H\|$ and $\lambda_{\min}(H)$ are achieved only with
product states. One reason for this in the case of $\|H\|$ is that we are using {\em random} states,
and product states have much larger
fluctuations than generic entangled states.  Indeed the variance of
$\bra\psi H \ket \psi$ for a random unit vector $\ket\psi$ is only
$O(m/2^n)$.  It is an interesting open question to find a distribution
over entangled states that improves the constant factors in
\thmref{main} that we achieve with product states.

{\bf Fourier analysis of boolean functions.} We will need some basic facts from classical Fourier analysis of boolean functions~\cite{odonnell14}. Any function $f:\{\pm1\}^n \rightarrow \R$ can be written as
\be f(x) = \sum_{S \subseteq [n]} \hat{f}(S) x_S,
\label{eq:f-hat}\ee
where $x_S := \prod_{i \in S} x_i$ and $[n] := \{1,\dots,n\}$. This is known as the Fourier expansion of $f$. Parseval's equality implies that
\be \label{eq:var} \Var(f) := \E_x[f(x)^2] - \E_x[f(x)]^2 = \sum_{S \neq \emptyset} \hat{f}(S)^2, \ee
where the expectation is taken over the uniform distribution on $\{\pm1\}^n$. In addition, $\hat{f}(\emptyset) = \E_x[f(x)]$. The influence of the $j$'th coordinate on $f$ is defined as
\[ \Inf_j(f) = \sum_{S \ni j} \hat{f}(S)^2. \]


\section{The quantum-classical correspondence}
\label{sec:qclass}

Let $H$ be a $k$-local Hamiltonian which has Pauli expansion
\[ H = \sum_{s \in \{I,X,Y,Z\}^n} \hat H_s\, s_1 \otimes s_2 \otimes \dots \otimes s_n \]
for some weights $\widehat{H}_s$ that we can view as a Fourier expansion of
$H$ analogous to that in \eq{f-hat}.  Define the norms
$\|\widehat{H}\|_p := (\sum_s |\widehat{H}_s|^p)^{1/p}$ for $p \ge
1$. In order to apply classical bounds to extremal eigenvalues of $H$,
we observe that the action of a $k$-local Hamiltonian on product
states corresponds to a low-degree polynomial. Define the following
set of states~\cite{renes04,rehacek04}: 
\begin{multline*} \ket{\psi_{++}} = \frac{1}{\sqrt{6}}\left( \sqrt{3+\sqrt{3}}\ket{0} + e^{i \pi/4} \sqrt{3-\sqrt{3}} \ket{1}\right),\\ \ket{\psi_{-+}} = Z \ket{\psi_{++}},\; \ket{\psi_{+-}} = X \ket{\psi_{++}},\; \ket{\psi_{--}} = Y \ket{\psi_{++}}.
\end{multline*}
%
These four states form a qubit 2-design; equivalently, a symmetric informationally-complete quantum measurement (SIC-POVM) on one qubit~\cite{renes04}. This measurement was studied in detail in~\cite{rehacek04}. Geometrically, the states describe a tetrahedron within the Bloch sphere~\cite{delsarte77}.

Then define the functions $\chi_s:\{\pm1\}^2 \rightarrow \R$, for $s \in \{I,X,Y,Z\}$, by
\[ \chi_s(x) = \bracket{\psi_x}{s}{\psi_x}. \]
These functions are pleasingly simple: one can verify that
\be \label{eq:funcs} \chi_I(x) = 1,\; \chi_X(x) = \frac{x_1}{\sqrt{3}},
\chi_Y(x) = \frac{x_1x_2}{\sqrt{3}},\;\chi_Z(x) = \frac{x_2}{\sqrt{3}}. \ee
%
Split each $x \in \{\pm1\}^{2n}$ into $n$ consecutive blocks of length 2, written as $x = x^{(1)} x^{(2)} \dots x^{(n)}$, and define the function $f_H:\{\pm1\}^{2n} \rightarrow \R$ by
\beas f_H(x) &=& \bra{\psi_{x^{(1)}}}\dots \bracket{\psi_{x^{(n)}}}{H}{\psi_{x^{(1)}}}\dots\ket {\psi_{x^{(n)}}}\\
&=& \sum_{s \in \{I,X,Y,Z\}^n} \widehat{H}_s \chi_{s_1}(x^{(1)}) \chi_{s_2}(x^{(2)}) \dots \chi_{s_n}(x^{(n)}).
\eeas
As each $x \in \{\pm1\}^{2n}$ corresponds to a state
$\ket{\psi_{x^{(1)}}}\dots\ket {\psi_{x^{(n)}}}$, we immediately have the bounds 
$\lambda_{\max}(H) \ge \max_{x \in \{\pm1\}^{2n}} f_H(x)$ and
$\lambda_{\min}(H) \le \min_{x \in \{\pm1\}^{2n}} f_H(x)$.   We will now
proceed to show bounds on $\max_{x \in \{\pm1\}^{2n}} f_H(x)$ and
$\min_{x \in \{\pm1\}^{2n}} f_H(x)$ by viewing $f_H(x)$ as a polynomial.
 
As $H$ is $k$-local and each function $\chi_s$ ($s \neq I$) is a monomial of
degree at most 2, $f_H$ is a polynomial of degree at most $2k$. 
Because the Fourier expansion of each function $\chi_s$ contains only one term, each term in $H$ corresponds to exactly one term in the Fourier expansion of $f_H$. Indeed
\be
\widehat{f_H}(s) = \widehat{H}_s 3^{-|s|/2},
\label{eq:fH-Fourier}\ee
where $s \in \{I,X,Y,Z\}^n$ and $|s| = |\{i: s_i \neq I\}|$. This
corresponds to identifying $\{I,X,Y,Z\}^n$ with subsets of $[2n]$ in
the natural way.
Thus by eqns.\ (\ref{eq:var}), (\ref{eq:funcs}) and \eq{fH-Fourier}
we have
\begin{align} \label{eq:varfh} \Var(f_H) &= \sum_{s \in \{I,X,Y,Z\}^n, s \neq I^n}
                                           \widehat{H}_s^2\,3^{-|s|}
\\
 \Inf_j(f_H) &= \sum_{s,s_j \neq I} \widehat{H}_s^2\,3^{-|s|}.
\nonumber\end{align}

%
%
%
%
%
%


\section{Operator norm bounds}\label{sec:operator}

We will use the following result of Barak et al.~\cite{barak15}, which is a constructive version of a probabilistic bound previously shown by Dinur et al.~\cite{dinur07}:

\begin{thm}[Barak et al.~\cite{barak15}]
\label{thm:classical}
There is a universal constant $C$ and a randomised algorithm such that
the following holds. Let $f:\{\pm1\}^n \rightarrow \R$ be a polynomial
with degree at most $k$ such that $\Var(f) = 1$. Let $t \ge 1$ and
suppose that $\Inf_i(f) \le C^{-k} t^{-2}$ for all $i \in [n]$. Then
with high probability the algorithm outputs $x \in \{\pm1\}^n$ such
that $|f(x)| \ge t$. The algorithm runs in time $\poly(m,n,\exp(k))$, where $m$ is the number of nonzero monomials in $f$.
\end{thm}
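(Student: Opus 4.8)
The plan is to peel the statement down, step by step, to a core that is handled by standard Fourier-analytic tools together with one substantial idea; this reduction and the idea are the content of Barak et al.~\cite{barak15} (whose non-constructive precursor is Dinur et al.~\cite{dinur07}), so I will only sketch the route. First I would dispose of the constant term: if $|\widehat{f}(\emptyset)| \ge t$ any $x$ works, so assume $|\widehat{f}(\emptyset)| < t$ and replace $f$ by $f - \widehat{f}(\emptyset)$; this changes $|f(x)|$ by at most $t$, so (after adjusting the constant $C$) it suffices to reach value $2t$ for this new $f$, which by \eq{var} has $\E_x[f(x)] = 0$ and $\E_x[f(x)^2] = \Var(f) = 1$, and which still has degree $\le k$ and the same influences.

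Second, I would record the two ``easy'' consequences. By the hypercontractive (Bonami) inequality~\cite{odonnell14} a degree-$\le k$ polynomial satisfies $\|f\|_4 \le 3^{k/2}\|f\|_2$, hence $\E_x[f(x)^4] \le 9^k$, and the Paley--Zygmund inequality then gives $\Pr_x[f(x)^2 \ge 1/2] \ge 9^{-k}/4$; so the best of $O(9^k)$ uniformly random samples already outputs $x$ with $|f(x)| \ge 1/\sqrt{2}$ with high probability in time $\poly(m,n)\exp(O(k))$, which settles the case $t = O(1)$. It is the amplification to large $t$ that forces us to use the influence hypothesis. Here I would first note that small influences make $f$ ``spread out'': since $\widehat{f}(S)^2 \le \Inf_i(f) \le C^{-k}t^{-2}$ for any $i \in S$, we get $\max_{S \ne \emptyset}|\widehat{f}(S)| \le C^{-k/2}t^{-1}$ and therefore $\|\widehat{f}\|_1 = \sum_{S \ne \emptyset}|\widehat{f}(S)| \ge (\sum_{S \ne \emptyset}\widehat{f}(S)^2)/\max_{S \ne \emptyset}|\widehat{f}(S)| \ge C^{k/2}t$, i.e.\ an amount of low-weight Fourier mass a factor $\sim C^{k/2}$ larger than the target. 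The content of the theorem is that this diffuse mass can actually be ``aligned'' by a single $\pm 1$ assignment, and aligned efficiently.

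That alignment is the main obstacle, both as an existence statement and as an algorithm, because for $k \ge 2$ it is a $p$-spin-glass-type optimisation (``$f(x) \ge t$ for a spread-out degree-$k$ form'' generalises ``the ground-state energy of the Sherrington--Kirkpatrick model is $\Theta(n^{3/2})$''), so no purely moment-based argument can suffice. The approach of~\cite{barak15} is a rounding argument built on a constant ($O(k)$) number of rounds of the Sum-of-Squares/Lasserre relaxation of $\max_x f(x)$: one shows this relaxation has value $\gtrsim t$ — using the small-influence condition to certify Sum-of-Squares validity of a suitable pseudo-distribution — and then rounds, again exploiting small influences (e.g.\ by iteratively conditioning the pseudo-distribution on the value of a low-influence coordinate, so that each conditioning step makes quantifiable progress towards a product distribution while keeping the $\Var = 1$ normalisation), producing an actual $x \in \{\pm1\}^n$ with $|f(x)| \ge t$. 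Each of these steps runs in time $\poly(m,n)\exp(O(k))$, and repeating the randomised rounding $\exp(O(k))$ times and keeping the best output boosts the success probability to ``with high probability'', giving the stated algorithm and runtime.
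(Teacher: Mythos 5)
First, a point of reference: the paper does not prove this theorem at all --- it is imported verbatim from Barak et al.~\cite{barak15} --- so your decision to defer the core of the argument to that reference is consistent with what the paper itself does. Your preliminary reductions are also essentially sound and are genuine ingredients of the real proof: subtracting $\widehat{f}(\emptyset)$ (though note that $|\widehat{f}(\emptyset)| \ge t$ does not mean ``any $x$ works''; it means $|\E_x[f(x)]| \ge t$, and you must then extract a witness, e.g.\ by the method of conditional expectations), the hypercontractivity-plus-Paley--Zygmund treatment of the case $t = O(1)$, and the observation that the influence hypothesis forces $\sum_{S \neq \emptyset} |\widehat{f}(S)| \ge C^{k/2} t$.

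The genuine gap is in the step you yourself flag as ``the main obstacle'': your account of how \cite{barak15} aligns the diffuse Fourier mass is not how that proof goes, and as written it is not a proof. Their argument involves no Sum-of-Squares/Lasserre relaxation and no pseudo-distribution conditioning; it is an elementary, constructive version of the random-restriction argument of Dinur et al.~\cite{dinur07}. In outline: one applies a random restriction (keeping each coordinate alive with probability $\approx 1/k$ and fixing the rest uniformly) so that, in expectation, an $\Omega(1/k)$ fraction of $\Var(f)$ lands on the degree-one part of the restricted function; one then greedily sets each live variable $x_i$ to the sign of its coefficient, which is a degree-$(\le k-1)$ polynomial $c_i$ in the fixed variables. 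The expected gain is $\sum_i \E|c_i| \gtrsim \exp(-O(k)) \sum_i \sqrt{\Inf_i(f)} \ge \exp(-O(k)) \Var(f)/\sqrt{\max_i \Inf_i(f)} \ge t$, where the first inequality is exactly the hypercontractive anticoncentration bound you already invoked for the $t = O(1)$ case, applied to the coefficient polynomials $c_i$ rather than to $f$ itself. (H\aa stad's independent algorithm \cite{hastad15} is in the same elementary greedy/restriction spirit, as is the $\lambda_{\min}$ algorithm of Section IV of this paper.) So the tools you classified as ``easy'' essentially constitute the whole proof, while the heavy machinery you propose for the hard step is neither needed nor what the cited result uses; a self-contained write-up should carry out the restriction-plus-sign-alignment argument rather than gesture at SoS rounding.
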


Recent independent work of H\aa stad~\cite{hastad15} describes an alternative, randomised algorithm achieving a similar bound.

Given the quantum-classical correspondence discussed in the previous section, we can now apply Theorem \ref{thm:classical} to $f_H$ to prove the following result, which is one half of Theorem \ref{thm:main}.

\begin{lem}
\label{lem:quantum1}
There is a universal constant $D$ and a randomised classical algorithm such that the following holds. Let $H$ be a traceless $k$-local Hamiltonian given as a weighted sum of $m$ Pauli terms such that, for all $j$, $\Inf_j(f_H) \le I_{\max}$. Then with high probability the algorithm outputs a product state $\ket{\psi}$ such that $|\bracket{\psi}{H}{\psi}| \ge D^{-k} \|\widehat{H}\|_2^2 / \sqrt{I_{\max}}$. The running time of the algorithm is $\poly(m,n,\exp(k))$.
\end{lem}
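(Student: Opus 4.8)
The plan is to apply Theorem \ref{thm:classical} directly to the polynomial $f_H$ constructed in Section \ref{sec:qclass}, after normalizing it so that its variance equals $1$. First I would recall from \eq{varfh} that $\Var(f_H) = \sum_{s \neq I^n} \widehat{H}_s^2 \, 3^{-|s|}$. Since $H$ is traceless we have $\widehat{H}_{I^n} = 0$, so in fact $\Var(f_H) = \sum_s \widehat{H}_s^2\, 3^{-|s|}$. This quantity is at least $3^{-k}\|\widehat{H}\|_2^2$, because each nonzero term has $|s| \le k$ by $k$-locality; it is also at most $\|\widehat{H}\|_2^2$. Let $V := \Var(f_H)$ and set $g := f_H / \sqrt{V}$, so that $\Var(g) = 1$ and $\Inf_j(g) = \Inf_j(f_H)/V \le I_{\max}/V$ for every $j$.

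Next I would choose the parameter $t$ in Theorem \ref{thm:classical} as large as the influence hypothesis permits. The hypothesis requires $\Inf_j(g) \le C^{-k} t^{-2}$ for all $j$, where $C$ is the universal constant from that theorem and the relevant degree is $2k$ (since $f_H$, hence $g$, has degree at most $2k$). It therefore suffices to take
\[ t = \frac{1}{C^{k} \sqrt{I_{\max}/V}} = \frac{\sqrt{V}}{C^{k}\sqrt{I_{\max}}}. \]
One must check that $t \ge 1$; this holds provided $I_{\max} \le V/C^{2k}$, and since $V \ge 3^{-k}\|\widehat{H}\|_2^2$ this is guaranteed for an appropriate bound on $I_{\max}$ relative to $\|\widehat{H}\|_2^2$. (If $I_{\max}$ is so large that $t < 1$, the claimed bound becomes weak enough to follow from a trivial argument, or one can simply absorb this into the choice of the constant $D$.) Theorem \ref{thm:classical} then produces, with high probability and in time $\poly(m,n,\exp(k))$, a string $x \in \{\pm1\}^{2n}$ with $|g(x)| \ge t$, i.e.\ $|f_H(x)| \ge t\sqrt{V} = V/(C^k\sqrt{I_{\max}})$.

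Finally I would translate back to the quantum side: the string $x$ corresponds to the product state $\ket{\psi} = \ket{\psi_{x^{(1)}}} \otimes \cdots \otimes \ket{\psi_{x^{(n)}}}$, and by construction $\bracket{\psi}{H}{\psi} = f_H(x)$, so $|\bracket{\psi}{H}{\psi}| \ge V/(C^k\sqrt{I_{\max}}) \ge 3^{-k}\|\widehat{H}\|_2^2/(C^k\sqrt{I_{\max}})$. Setting $D := 3C$ (or any constant with $D^k \ge 3^k C^k$) gives the stated bound $|\bracket{\psi}{H}{\psi}| \ge D^{-k}\|\widehat{H}\|_2^2/\sqrt{I_{\max}}$, and the running time is inherited from Theorem \ref{thm:classical}. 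The only genuinely delicate point is bookkeeping the degree correctly — $f_H$ has degree $2k$, not $k$, so the constant in the exponent is really $C^{2k}$ in the influence condition — and ensuring the edge case $t < 1$ is handled cleanly; neither is a real obstacle, just a matter of being careful with constants. I expect no substantive difficulty beyond this.
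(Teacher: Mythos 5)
Your argument for the main case is essentially the paper's: normalise $f_H$, choose $t$ as large as the influence condition of Theorem \ref{thm:classical} allows, and translate back through the quantum--classical correspondence. Your degree bookkeeping (degree $2k$, hence $C^{-2k}$ in the influence condition) is actually slightly more careful than the paper's own write-up, and the constant-chasing to reach $D^{-k}\|\widehat{H}\|_2^2/\sqrt{I_{\max}}$ is fine.

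However, the parenthetical dismissal of the case $t<1$ is a genuine gap. When $I_{\max} \gtrsim c^{-k}\|\widehat{H}\|_2^2$ (so that $t<1$ and Theorem \ref{thm:classical} cannot be invoked), the required conclusion is still $|\bracket{\psi}{H}{\psi}| \ge D^{-k}\|\widehat{H}\|_2^2/\sqrt{I_{\max}}$, and since $I_{\max}$ can be as small as order $\|\widehat{H}\|_2^2$ in this regime, this is a bound of order $c^{-k}\|\widehat{H}\|_2$ --- which does not vanish and cannot be made to hold ``by choice of $D$,'' nor is it trivial: you must actually exhibit (algorithmically) a product state whose energy is at least a $c^{-k}$ fraction of $\|\widehat{H}\|_2$. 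The paper fills this in with a separate argument: a uniformly random $x$ has $\E_x[f_H(x)^2] = \Var(f_H) \ge 3^{-k}\|\widehat{H}\|_2^2$, and by the anticoncentration bound for degree-$2k$ polynomials (\cite[Theorem 9.24]{odonnell14}, a hypercontractivity-based result) one has $\Pr_x[|f_H(x)| \ge \sqrt{\Var(f_H)}] \ge \exp(-O(k))$, so $\exp(O(k))$ random product states suffice with high probability; one then checks that $3^{-k/2}\|\widehat{H}\|_2 \ge D^{-k}\|\widehat{H}\|_2^2/\sqrt{I_{\max}}$ precisely in the large-$I_{\max}$ regime. (An alternative patch would be to derandomise $\E_x[f_H(x)^2]$ by conditional expectations, but some such argument is needed.) Note also that the final algorithm must run both branches and keep the better answer, since it is not told which regime it is in. You should add this second branch explicitly; as written, your proof does not establish the lemma for all admissible $I_{\max}$.
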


\begin{proof}
First observe that if we simply pick $x \in \{\pm1\}^{2n}$ uniformly at random and consider the corresponding product state $\ket{\psi_x}$,
\[ \E_x[ \bracket{\psi_x}{H}{\psi_x}^2] = \Var(f_H) \ge \frac{\|\widehat{H}\|_2^2}{3^k} \]
by (\ref{eq:varfh}). In addition (see e.g.~\cite[Theorem 9.24]{odonnell14}), as $f_H$ is a degree-$2k$ polynomial,
\[ \Pr_x[|f_H(x)| \ge \sqrt{\Var(f_H)}] \ge \exp(-O(k)). \]
Therefore, simply picking $\exp(O(k))$ random product states of the form
$\ket{\psi_x}$ achieves $|\bracket{\psi_x}{H}{\psi_x}|
\ge \|\widehat{H}\|_2/3^{k/2}$ with high probability.
Let $E$ be a universal constant to be chosen later. If $I_{\max} \ge E^{-k}\|\widehat{H}\|_2^2$, then $|\bracket{\psi_x}{H}{\psi_x}| \ge (\sqrt{3E})^{-k} \|\widehat{H}\|_2^2 / \sqrt{I_{\max}}$ as
desired, taking $D = \sqrt{3E}$. So assume henceforth that $I_{\max} \le E^{-k} \|\widehat{H}\|_2^2$. Let $f'_H =
f_H / \sqrt{\Var(f_H)}$ so $\Var(f'_H) = 1$. Then 
\[ \Inf_j(f'_H) =
\frac{\Inf_j(f_H)}{ \Var(f_H)} \le
3^k \frac{I_{\max}}{ \|\widehat{H}\|_2^2 }
.\]
 Set 
\[ t = \frac{C^{-k/2}}{\sqrt{\max_i \Inf_i(f'_H)}} \ge
 E^{-k/2} \frac{ \|\widehat{H}\|_2 }{\sqrt{I_{\max}}} \ge 1,
\]
where $C$ is the constant in Theorem \ref{thm:classical} and we choose
$E$ large enough for the first inequality to hold. Then the algorithm
of Theorem \ref{thm:classical} outputs $x \in \{\pm1\}^{2n}$ such that
$|f'_H(x)| \ge E^{-k/2} \|\widehat{H}\|_2/ \sqrt{I_{\max}}
$. Renormalising again by multiplying by $\sqrt{\Var(f_H)} \ge
3^{-k/2} \|\widehat{H}\|_2$, $|\bracket{\psi_x}{H}{\psi_x}| = |f_H(x)|
\ge (\sqrt{3E})^{-k} \|\widehat{H}\|_2^2/\sqrt{I_{\max}}$, which
completes the proof.

Note that the algorithm does not need to know whether $I_{\max}$ is
large or not, since it can simply try both strategies and see which
one results in the larger value of $|\bra\psi H \ket\psi|$.
\end{proof}

The first part of Theorem \ref{thm:main} is now immediate from Lemma \ref{lem:quantum1}. Let $H$ be a $k$-local Hamiltonian with $k=O(1)$ such that $H$ is a sum of $m$ distinct Pauli terms, each of weight $\Theta(1)$, with each qubit participating in $\ell$ terms. Then $\|\widehat{H}\|_2^2 = \Theta(m)$, $\Inf_j(f_H) = O(\ell)$.


\section{Bounds on extremal eigenvalues}

We now describe an algorithm for bounding extremal eigenvalues which is weaker, but holds for both the largest and smallest eigenvalues. Once again, the algorithm is based on applying the quantum-classical correspondence in Section \ref{sec:qclass} to a classical algorithm. We first describe the classical algorithm, which is a simple greedy approach to find large values taken by a low-degree polynomial on the boolean cube.

Given $f:\{\pm1\}^n \rightarrow \R$ define $W:= W(f) = \sum_{S \neq
  \emptyset} |\hat{f}(S)| $.
Assume that, for all $i \in [n]$,
\[ |\{T \subseteq [n]: \hat{f}(T) \neq 0 \text{ and } i \in T\}| \le \ell. \]
Consider the following algorithm, based on ideas of~\cite{hastad00} but somewhat simpler:
\begin{enumerate}
\item Find $S$ such that $|\hat{f}(S)|$ is maximal.
\item Substitute values for $x_i$, $i \in S$, such that
  $\hat{f}(\emptyset)$ increases by at least $|\hat{f}(S)|$.
\item Repeat until $f$ is constant; call this constant $f_{\text{end}}$.
\end{enumerate}
It is not obvious that step (2) can be achieved, because there might exist $T \subsetneq S$ such that $\hat{f}(T) \neq 0$. Define a function $f_S$ by $f_S(x) = \sum_{T \subseteq S} \hat{f}(T) x_T$. For each $T \subsetneq S$ such that $T \neq \emptyset$ and each $a \in \{\pm1\}$, $\E_{x,x_S = a}[x_T] = 0$. So $\E_{x,x_S=\sgn(\hat{f}(S))}[f_S(x)] = \hat{f}(\emptyset) + |\hat{f}(S)|$, and there must exist some $y$ achieving $f_S(y) \ge \hat{f}(\emptyset) + |\hat{f}(S)|$. Searching over at most $2^k$ different values $x$ is sufficient to find $y$.

\begin{lem}
\label{lem:quantum2}
When the above algorithm terminates, $f_{\text{end}} \ge \hat{f}(\emptyset) + W / (2k \ell)$.
\end{lem}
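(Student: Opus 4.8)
The plan is to track how much the constant term $\hat{f}(\emptyset)$ grows over the course of the algorithm, and to argue that the total growth is at least $W/(2k\ell)$. The natural approach is a potential/amortization argument on $W$ itself: each time we pick a maximal Fourier coefficient $\hat{f}(S)$ and substitute into the coordinates of $S$, we eliminate from the Fourier expansion not only the term $S$ but every term $T$ with $T \cap S \neq \emptyset$. So the key quantity to control is: how much Fourier weight (in $\ell_1$ norm, i.e.\ how much of $W$) can be destroyed in a single step, relative to the guaranteed gain $|\hat{f}(S)|$ in $\hat{f}(\emptyset)$?

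First I would observe that since each coordinate appears in at most $\ell$ nonzero Fourier terms, and $|S| \le k$, the set $S$ intersects at most $k\ell$ nonzero terms $T$ (including $S$ itself). Second, since $S$ was chosen to have maximal $|\hat{f}(S)|$ at that step, every such intersecting $T$ satisfies $|\hat{f}(T)| \le |\hat{f}(S)|$. Hence the drop in $W$ during this step is at most $k\ell \cdot |\hat{f}(S)|$ — but one must be slightly careful here, because substituting values into $x_i$ for $i \in S$ does not merely delete the terms $T$ with $T\cap S\neq\emptyset$; a term $T$ with $\emptyset \neq T \cap S \subsetneq T$ collapses to a lower-order term $T \setminus S$ (times $\pm 1$), which may combine with an already-present coefficient $\hat{f}(T\setminus S)$. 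The cleanest way to handle this is to note that $W$ can only \emph{decrease} by collapsing: $|\hat{f}(T\setminus S) \pm \hat{f}(T)| \le |\hat{f}(T\setminus S)| + |\hat{f}(T)|$, so the new $W$ is at most the old $W$ minus the $\ell_1$ mass of the terms that were genuinely touched, i.e.\ the terms meeting $S$. A factor of $2$ is the safe margin: the drop in $W$ is at most $2k\ell\,|\hat{f}(S)|$, which covers both the deleted terms and any cancellation slack, and this is exactly where the constant $2k\ell$ in the statement comes from. (I would double-check whether the factor $2$ is actually needed or is just slack; the collapse argument above suggests the bound $k\ell |\hat f(S)|$ might already suffice, but $2k\ell$ is certainly safe and matches the claim.)

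Given this per-step inequality, the argument concludes by summation. Write the steps as $i = 1, 2, \dots$; let $\Delta_i \ge |\hat{f}(S_i)|$ be the increase in $\hat{f}(\emptyset)$ at step $i$, and let $\delta_i$ be the decrease in $W$ at step $i$. Then $\delta_i \le 2k\ell\,|\hat{f}(S_i)| \le 2k\ell\,\Delta_i$. Summing over all steps, $\sum_i \delta_i = W(f) - W(f_{\text{end}}) = W$ (since $f_{\text{end}}$ is constant, $W(f_{\text{end}}) = 0$), while $\sum_i \Delta_i = f_{\text{end}} - \hat{f}(\emptyset)$. Therefore $W \le 2k\ell\,(f_{\text{end}} - \hat{f}(\emptyset))$, which rearranges to the claimed bound $f_{\text{end}} \ge \hat{f}(\emptyset) + W/(2k\ell)$. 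One should also note that the algorithm terminates: each step zeroes out at least one nonzero Fourier coefficient (namely $\hat{f}(S)$), so after at most $m$ steps $f$ is constant.

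The main obstacle is the bookkeeping in the per-step bound — specifically, correctly accounting for the fact that substitution collapses higher-order terms onto lower-order ones rather than simply deleting them, so that one cannot naively say "$W$ drops by exactly $\sum_{T \cap S \neq \emptyset} |\hat f(T)|$." The resolution is the triangle-inequality observation that collapsing can only help (cancellation never increases $W$), together with the counting bound that at most $k\ell$ nonzero terms meet $S$ and each has $\ell_1$-mass at most $|\hat f(S)|$ by maximality. Everything else is a clean telescoping sum.
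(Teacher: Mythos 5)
Your proposal is correct and follows essentially the same argument as the paper: bound the per-step drop in $W$ by $2k\ell\,|\hat f(S)|$ using the count of at most $k\ell$ intersecting terms, maximality of $|\hat f(S)|$, and the factor $2$ to account for a collapsed term both vanishing at $T$ and cancelling mass at $T\setminus S$, then telescope. (Your side remark that the factor $2$ might be slack is not right --- a term $T$ meeting $S$ can cancel an equal-magnitude coefficient at $T\setminus S$, which is disjoint from $S$ and hence not among the $k\ell$ counted terms, so the factor $2$ is needed for this argument --- but since you retain it, the proof stands.)
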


\begin{proof}
Let $f_j$ be the new function produced at the $j$'th stage of the
algorithm, with $f_0 = f$. Let $M_j$ be the value of
$|\widehat{f_{j-1}}(S)|$ corresponding to the set $S$ chosen at stage
$j$. Then $\widehat{f_j}(\emptyset) = \widehat{f_{j-1}}(\emptyset) +
M_j$ and $W(f_j) \ge W(f_{j-1}) - 2k \ell M_j$. The latter inequality
is shown as follows. For each $i \in S$, there are at most $\ell$
subsets $T$ such that $\widehat{f_{j-1}}(T) \neq 0$ and $i \in T$. So
there are at most $k\ell$ subsets $T$ such that $T \cap S \neq
\emptyset$. For each such $T$, the substitution of values $x_i$, $i
\in S$, implies that $\widehat{f_j}(T)$ is set to 0, and for some
other subset $T'$, $\widehat{f_j}(T') = \widehat{f_{j-1}}(T') \pm
\widehat{f_{j-1}}(T)$. These are the only coefficients modified by the
substitution process. Thus $W(f_j)$ can only decrease by at most
$2|\widehat{f_{j-1}}(T)| \le 2M_j$ for each $T$ such that $S \cap T
\neq \emptyset$. 
Rearranging we have $M_j \ge (W(f_{j-1}) - W(f_j))/(2k \ell)$ for $j \ge 1$ and hence
\beas f_{\text{end}} &=& \hat{f}(\emptyset) + \sum_j M_j\\
&\ge& \hat{f}(\emptyset) + \frac{1}{2k\ell} \sum_j W(f_{j-1}) - W(f_j)\\
&=& \hat{f}(\emptyset) + \frac{W}{2k\ell}
\eeas
as claimed.
\end{proof}

The following lemma is now essentially immediate.

\begin{lem}
\label{lem:quantum3}
There is a universal constant $E$ and a deterministic classical
algorithm such that the following holds. Let $H$ be a traceless
$k$-local Hamiltonian which can be written as a weighted sum of $m$
distinct Pauli terms such that each qubit participates in at most $\ell$ terms. Then
the algorithm outputs a product state $\ket{\psi}$ such that
$\bracket{\psi}{H}{\psi} \ge E^{-k} \|\widehat{H}\|_1 / \ell$. The algorithm runs in time
$\poly(m,n,\exp(k))$.
\end{lem}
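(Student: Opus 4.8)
The plan is to apply the greedy algorithm of \lemref{quantum2} to the polynomial $f_H$ constructed via the quantum-classical correspondence of \secref{qclass}, exactly as \lemref{quantum1} applied \thmref{classical} to $f_H$. First I would recall that, by \eq{fH-Fourier}, the nonzero Fourier coefficients of $f_H$ are in bijection with the Pauli terms of $H$, with $\widehat{f_H}(s) = \widehat{H}_s 3^{-|s|/2}$. Since $H$ is traceless, $\widehat{H}_I = 0$, so $\widehat{f_H}(\emptyset) = 0$. The degree of $f_H$ is at most $2k$ since each $\chi_s$ is a monomial of degree at most $2$. The locality-degree parameter needed by \lemref{quantum2} is the maximum over coordinates $i \in [2n]$ of the number of nonzero Fourier coefficients $\widehat{f_H}(T)$ with $i \in T$; since each qubit of $H$ maps to a block of two boolean coordinates and participates in at most $\ell$ Pauli terms, this is at most $\ell$.

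Next I would lower-bound $W(f_H) = \sum_{s \neq I} |\widehat{f_H}(s)| = \sum_{s \neq I} |\widehat{H}_s| 3^{-|s|/2} \ge 3^{-k/2} \sum_{s} |\widehat{H}_s| = 3^{-k/2} \|\widehat{H}\|_1$, again using tracelessness and $|s| \le k$. Then \lemref{quantum2} applied to $f_H$ (with its degree $2k$ in place of $k$ and locality-degree $\ell$) yields a point $x \in \{\pm1\}^{2n}$ with
\[ f_H(x) \ge \widehat{f_H}(\emptyset) + \frac{W(f_H)}{2(2k)\ell} \ge \frac{3^{-k/2}\|\widehat{H}\|_1}{4k\ell}. \]
Choosing the universal constant $E$ so that $E^{k} \ge 4k\,3^{k/2}$ for all $k \ge 1$ (e.g.~$E = 4\sqrt{3}$ works since $k \le 2^{k}$... more carefully, any $E$ with $E \ge \sqrt 3 \cdot \sup_k (4k)^{1/k}$ suffices, and $\sup_k (4k)^{1/k}$ is a finite absolute constant), we get $f_H(x) \ge E^{-k}\|\widehat{H}\|_1/\ell$. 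Finally, since $x$ corresponds to the product state $\ket{\psi_x} = \ket{\psi_{x^{(1)}}}\cdots\ket{\psi_{x^{(n)}}}$, we have $\bracket{\psi_x}{H}{\psi_x} = f_H(x) \ge E^{-k}\|\widehat{H}\|_1/\ell$, as claimed. The running time is that of the greedy algorithm: it runs for at most $m$ stages (each stage zeros at least one coefficient), each stage searches over at most $2^{2k}$ sign patterns and updates $\poly(m)$ coefficients, giving $\poly(m,n,\exp(k))$ overall; the algorithm is deterministic.

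The step requiring the most care is verifying that the hypotheses of \lemref{quantum2} transfer cleanly: specifically, that substituting values for the two boolean coordinates in a block behaves correctly under the $\chi_s$ encoding, and that the locality-degree bound is genuinely $\ell$ rather than, say, $2\ell$ (it is $\ell$ because a single Pauli term touches at most $k$ qubits hence at most $2k$ boolean coordinates, but a fixed boolean coordinate $i$ lies in the block of a single qubit, which is in at most $\ell$ terms). One mild subtlety is that \lemref{quantum2} is stated for a polynomial of degree $k$, and here we feed it a polynomial of degree $2k$, so the bound picks up a factor of $2$ in the denominator; this is harmless and absorbed into $E$. Everything else is bookkeeping with the constants, and the phrase ``essentially immediate'' in the paper is accurate.
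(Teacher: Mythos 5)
Your proposal is correct and follows essentially the same route as the paper: apply the greedy algorithm of \lemref{quantum2} to the degree-$2k$ polynomial $f_H$, use $\widehat{f_H}(\emptyset)=0$ and $W(f_H) \ge 3^{-k/2}\|\widehat{H}\|_1$, and absorb the resulting $3^{-k/2}/(4k\ell)$ factor into $E^{-k}/\ell$. Your extra care in verifying that the locality-degree parameter is $\ell$ (not $2\ell$) and in bounding the running time is a welcome elaboration of details the paper leaves implicit.
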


\begin{proof}
Apply the algorithm of Lemma \ref{lem:quantum2} to the degree-$2k$ polynomial $f_H:\{\pm1\}^{2n} \rightarrow \R$ defined as in Section \ref{sec:qclass}. We have $W(f_H) \ge 3^{-k/2} \|\widehat{H}\|_1$, $\widehat{f_H}(\emptyset) = 0$. Hence the algorithm finds $\ket{\psi}$ such that $\bracket{\psi}{H}{\psi} \ge 3^{-k/2} \|\widehat{H}\|_1 / (4k \ell) \ge E^{-k} \|\widehat{H}\|_1 / \ell$ for a large enough universal constant $E$.
\end{proof}

Applying the same procedure to $-H$ is of course sufficient to also find $\ket{\psi}$ such that $\bracket{\psi}{H}{\psi} \le -E^{-k} \|\widehat{H}\|_1 / \ell$. Observing that $\|\widehat{H}\|_1 = \Theta(m)$ for a Hamiltonian $H$ which is a sum of $m$ distinct Pauli terms of weight $\Theta(1)$ completes the proof of the second part of Theorem \ref{thm:main}.

\section{Optimality}
Both bounds in Theorem \ref{thm:main} are tight, even for classical Hamiltonians, as demonstrated by the following examples. The first is based on an example in~\cite{barak15}. Consider the 2-local Hamiltonian on $n$ qubits
\[ H = \sum_{(i,j) \in E} \widehat{H}_{ij} \, Z_i Z_j, \]
where $E$ is the edges of an arbitrary $r$-regular undirected graph on $n$ vertices, and each weight $\widehat{H}_{ij} \in \{\pm1\}$ is picked uniformly at random. Then $m = rn/2$, $\ell = r$. For each fixed $x \in \{0,1\}^n$,
\[ \bracket{x}{H}{x} = \sum_{(i,j) \in E} \widehat{H}_{ij} (-1)^{x_i + x_j} \]
is a sum of $rn/2$ uniformly distributed elements of $\{\pm1\}$. Via a standard Chernoff bound argument,
\[ \Pr_H[ |\bracket{x}{H}{x}| \ge t] \le 2e^{-\frac{t^2}{rn}}. \]
Fixing $t = \Theta(n\sqrt{r})$ and taking a union bound over all $x \in \{0,1\}^n$, $\|H\| = O(n\sqrt{r}) = O(m/\sqrt{\ell})$ with high probability.

Second, consider the 2-local Hamiltonian on $n$ qubits
\[ H = \sum_{i < j} Z_i Z_j. \]
Then $H$ is a sum of $m = \Theta(n^2)$ Pauli terms of weight 1, where each qubit participates in $\ell = \Theta(n)$ terms. We have
\[ \bracket{x}{H}{x} = \sum_{i<j} (-1)^{x_i + x_j} = \frac{(n-2|x|)^2-n}{2}, \]
%
so $\lambda_{\min}(H) \ge -n/2 = -\Theta(m/\ell)$.

It is an open question whether both the bounds on $\|H\|$ and
$\lambda_{\min}(H)$ can be saturated at the same time.
%
%


\subsection*{Acknowledgements}

We would like to thank Oded Regev for the key idea behind the proof of
Lemma \ref{lem:quantum1}. This work was initiated at the BIRS workshop
15w5098, ``Hypercontractivity and Log Sobolev Inequalities in Quantum
Information Theory''. We would like to thank BIRS and the Banff Centre
for their hospitality. AM was supported by the UK EPSRC under Early
Career Fellowship EP/L021005/1.  AWH was funded by NSF grants
CCF-1111382 and CCF-1452616 and ARO contract 
W911NF-12-1-0486.
No new data were created during this study.



\bibliographystyle{revtex4-1}
\bibliography{refs}

\end{document}